
\documentclass[abstract=on,notitlepage,superscriptaddress,nofootinbib,aps,showpacs,twocolumn,prd]{revtex4}
\usepackage{srcltx}
\usepackage{epsf}
\usepackage{amsfonts}
\usepackage{amsmath}
\usepackage{mathrsfs}
\usepackage{epsfig}
\usepackage{enumerate}
\usepackage{lipsum}
\usepackage{graphicx}
\usepackage{epsf}
\usepackage{subfig}
\usepackage{color}
\usepackage{caption}
\usepackage{subfig}
\usepackage{epstopdf}
\usepackage{float}
\usepackage{dcolumn}
\usepackage{hyperref}
\usepackage{amsthm}

\jot=8pt

\setcounter{topnumber}{9}
\setcounter{bottomnumber}{9}
\setcounter{totalnumber}{20}
\setcounter{dbltopnumber}{9}

\newtheorem{prop}{Proposition}

\newtheorem{thm}{Theorem}

\newcommand{\be}{\begin{equation}}
\newcommand{\ee}{\end{equation}} 
\newcommand{\eei}{\end{equation}\indent\indent}
\newcommand{\bc}{\begin{center}}
\newcommand{\ec}{\end{center}}
\newcommand{\ber}{\begin{eqnarray*}}
\newcommand{\ear}{\end{eqnarray*}}
\newcommand{\ba}{\begin{array}}
\newcommand{\ea}{\end{array}}

\newcommand{\na}{\nabla}

\newcommand{\vv}{{\cal V}}

\newcommand{\bea}{\begin{eqnarray}}
\newcommand{\eea}{\end{eqnarray}}

\newcommand{\ei}{\end{itemize}}

\newcommand{\bra}[1]{\left(#1\right)}
\newcommand{\bras}[1]{\left[#1\right]}



\newcommand{\Lietwo}{{\cal L}}

\newcommand{\reff}[1]{(\ref{#1})}







\def\case#1/#2{\textstyle\frac{#1}{#2} }

\def\fp{f_{,R}}
\def\fpp{f_{,RR}}

 \def\prd{Phys.\ Rev.\ D }

 \def\plb{{\em Phys. Lett.\/} B\/}

\def\cqg{{\em Class. Quantum Grav.\/} }

\def\aph{{\em Ann. Phys. (NY)\/} }

 {\begin{list}{}%
         {\setlength{\leftmargin}{#1} %
         \setlength{\rightmargin}{#2}%
         }%
         \item[\textit{}]%
 }
 {\end{list}}

%
%
\makeatletter

\voffset=0.5in
\begin{document}
\sloppy

\title{Buchdahl-Bondi limit in modified gravity: Packing extra effective mass in relativistic compact stars}

\author{Rituparno Goswami}
\email{Goswami@ukzn.ac.za}
\affiliation{Astrophysics \& Cosmology Research Unit,
School of Mathematics Statistics and Computer Science,
University of KwaZulu-Natal,
Private Bag X54001, Durban 4000, South Africa.}

\author{Sunil D. Maharaj}
\email{Maharaj@ukzn.ac.za}
\affiliation{Astrophysics \& Cosmology Research Unit,
School of Mathematics Statistics and Computer Science,
University of KwaZulu-Natal,
Private Bag X54001, Durban 4000, South Africa.}

\author{Anne Marie Nzioki}
\email{anne.nzioki@gmail.com}
\affiliation{Astrophysics \& Cosmology Research Unit,
School of Mathematics Statistics and Computer Science,
University of KwaZulu-Natal,
Private Bag X54001, Durban 4000, South Africa.}

\begin{abstract}
We generalise the Buchdahl-Bondi limit for the case of static, spherically symmetric, relativistic compact stars immersed in Schwarzschild vacuum in $f(R)$-theory of gravity, subject to very generic regularity, thermodynamic stability and matching conditions. Similar to the case of general relativity, our result is model independent and remains true for any physically realistic equation of state of standard stellar matter. We show that an extra-massive stable star can exist in these theories, with surface redshift larger than 2, which is forbidden in general relativity. This result gives a novel and interesting observational test for validity or otherwise of general relativity and also provides a possible solution to the dark matter problem.
\end{abstract}
\pacs{04.20.Cv , 04.40.Dg}
\maketitle
\nopagebreak

\section{Introduction}
In general relativity (GR), differentiable properties and regularity conditions of Einstein field equations lead to certain interesting bounds on stellar structures \cite{chandra}.
One of such fascinating upper bound, is the Buchdahl-Bondi bound \cite{buch1,buch2,bondi,islam,wald,stephani} for the static, spherically symmetric compact stars immersed in Schwarzschild vacuum. This bound states that the mass to radius ratio $2M/r_b$ of any regular and thermodynamically stable perfect fluid star must be less than $8/9$. We note that this is a stricter upper bound than the Schwarzschild static limit $2M/r_b=1$. In other words, even though the mass of the star lies within the untrapped region, we cannot have a stable static stellar configuration for $(8/9)r_b\le2M<r_b$. The interesting points about this general result can be summarised as follows:
\begin{enumerate}[(a)]
\item This bound can be proved using minimal technical conditions, namely the regularity and smoothness of metric functions in the stellar interior, matching conditions at the boundary of the star where the spacetime is smoothly matched to a Schwarzschild exterior. Also for thermodynamic stability, we need to impose the condition that the average density is a non-increasing function of the radial coordinate.
\item This result is model independent. Hence it is true for any physically realistic equation of state for the stellar matter.
\item A direct consequence of this bound is that the gravitational redshift $z$ at the stellar surface is bounded from above ($z\le2$) \cite{wald,stephani}.
\end{enumerate}
Several variations/modifications of this limit have been found since, by altering the conditions mentioned above \cite{visser, andreasson, lemos, zarro} or by considering anisotropic stars (see for example \cite{boehmer,ivanov,lake} and the references therein). 

Now the key question that arises here is, though this result is quite generic within GR, how does it change if the theory of gravity is modified. This question is important  even though GR still remains as the successful theory of gravity so far, alternative theories of gravity do exist. These theories are motivated by the ambiguous nature of dark energy in cosmology which is responsible for the observed late time accelerated expansion of the universe. Another motivation, of course, is the  unknown nature of dark matter which dominates the matter budget of the galaxies that can be gravitationally detected by the galaxy rotation curves. 

To eliminate the need for the so called exotic dark sectors in the universe, an alternative possibility is to conjecture that GR is an effective local  theory of a more general theory on universal scale. Among the numerous modifications of GR that naturally provide a late time cosmological accelerated expansion, without the need for the presence of dark fluids, is $f(R)$-gravity. This theory is based on a gravitational action that contains an arbitrary but well defined function of the Ricci scalar $R$ \cite{DEfR,fr,fr2,fr1,scm}. By expanding the function $f$ around GR, we can easily see that the higher order curvature terms naturally admit a phase of accelerated expansion both in the early universe as an inflationary phase \cite{star80}, and also in the late times after passing through a matter dominated decelerating expansion phase\cite{shosho}. This theory of gravity essentially contain an additional scalar degree of freedom which is the Ricci scalar. Hence, this class of theories can be considered as a natural extension to GR, and contrary to other theories that have the square of Ricci or Riemann tensors in the action, the Ostrogradski instability can be eliminated, despite the equations of motion being fourth-order in the metric components.

\subsection{The key question}

If we assume the framework of $f(R)$- gravity as the modification to GR on larger scales and strong gravity regimes, then the natural question one would ask is:\\

{\bf Question:} {\em Is it possible to have more massive but stable compact stars in $f(R)$-theories of gravity? In other words, can the mass to radius ratio of a compact star lie in the forbidden region in GR, thus making the surface redshift of the star larger than 2?}\\

The importance of this question is two-fold. If we can prove that the mass to radius ratio of a star can be within the forbidden region in these theories, then this gives a direct experimental test for the validity or otherwise of GR in strong gravity regime. Secondly the extra effective mass that can be packed in compact stars can account, to some extent, for the dark matter in the galaxies, that only have gravitational signatures.

\subsection{In this paper}

In this paper we discuss and try to answer the above question, for stable static relativistic compact stars in $f(R)$ gravity, immersed in the Schwarzschild vacuum. The existence of such stellar solutions is already shown in many earlier works (see for example \cite{3GR,ssfr} and the references therein). Apart from the regularity, thermodynamic stability and matching conditions that are used to prove the Buchdahl-Bondi limit in GR, we now have to deal with other extra conditions:
\begin{enumerate}
\item The existence of the Schwarzschild spacetime as a solution to the field equations, gives certain conditions on the function $f$. Since Schwarzschild spacetime is experimentally well verified around stars (e.g. the solar system), we will only consider those class of the function $f$ that admits a Schwarzschild solution.
\item For higher order theories such as $f(R)$, we have extra matching conditions on the surface of the star \cite{Deruelle, Clifton, Senovilla}, apart from the usual Israel-Darmois \cite{Israel, Darmois} conditions in GR.
\item Since the Ricci scalar is a dynamic degree of freedom in this theory, we must impose certain physically reasonable conditions on it in the interior of the star (such as monotonicity) to get a physically viable result.
\end{enumerate}
Taking all these extra conditions into account, in this paper we transparently demonstrate the modification of Buchdahl-Bondi limit in $f(R)$-theories, and show that a stable star can exist in the otherwise forbidden region in GR, with surface redshift larger than 2. This result is absolutely generic since it is model independent as in GR. It is true for any equation of state for standard stellar matter and also for any function $f$ that satisfies the conditions stated above. \\

Unless otherwise specified, throughout this paper we use units which fix the speed of light and the
gravitational constant via $8\pi G = c^4 = 1$, and the metric signature is $+2$.

\section{Field equations for relativistic static stars in $f(R)$-gravity}
To obtain the field equations in $f(R)$-gravity we begin with the modification to the Einstein-Hilbert action by generalising the Lagrangian  
so that the Ricci scalar $R$ is replaced by a function $f(R)$. The modified action, therefore, is given by
\be
\label{action}
{\cal S}= \frac12 \int dV \bras{\sqrt{-g}\,f(R)+ 2\,\Lietwo_{M}(g_{ab}, \psi) } ~,
\ee
where $ \Lietwo_{M} $ is the Lagrangian density of the matter fields $\psi$,  $g$ is
the determinant of the metric tensor $g_{ab}$ $(a,b=0,
1, 2, 3)$, $R$ is the scalar curvature, and $f(R)$ is the real valued and well behaved 
function defining the theory under consideration. Varying the action \reff{action} with respect to the metric over a 4-volume
yields the field equations
\be
\label{field}
G_{ab} \, \fp -  \frac12 g_{ab}\,(f-R \,  \fp)
 -\na_{a}\na_{b}\fp +g_{ab} \, \Box  \fp =   T^{M}_{ab}~,
\ee
where $\fp =d f(R)/dR $, $\Box \equiv \na_{c}\na^{c}$, $G_{ab}$ is the Einstein tensor and
$T^{M}_{ab} $ is the matter energy momentum tensor defined by 
\be
 \label{EMT}
T^{M}_{ab}=- \frac{2}{\sqrt{-g}} \, \frac{\delta \Lietwo_{M}} {\delta g^{ab} } ~.
\ee
We can easily see that in the special case of  $f(R) = R$, the field equations \reff{field} reduce to the
standard Einstein field equations. These
theories are also  known as fourth-order gravity, since the
term $( g_{ab} \Box - \nabla_a \nabla_b )\fp $ has fourth-order
derivatives of the components of the metric tensor.  

We can map the fourth order field equatiions \reff{field} to the effective Einstein equations by
\be
\label{field2} 
G_{ab} = \bra{R_{ab}-\frac12 g_{ab} \, R} 
= \tilde{T}^{M}_{ab} + T^{R}_{ab} = T_{ab}~,
\ee
where we define $T_{ab}$ as the total energy momentum tensor comprising of an effective matter contribution
\be
\tilde{T}^{M}_{ab} = \frac{T^{M}_{ab}}{ \fp}\;,
\label{matterEMT} 
\ee 
and the contribution from the scalar curvature of the spacetime as
\be
T^{R}_{ab} = \frac{1}{ \fp} \bras{\frac12g_{ab} \,(f-R\, \fp) 
+\na_{a}\na_{b} \fp- g_{ab}\,\Box \fp}~. 
\label{curvatureEMT}
\ee
Writing the field equations in the form (\ref{field2}) has an extra advantage. We can consider the dynamics of the fourth order gravity on the same footing as that of general relativity with two matter fields, the first is sourced by the standard matter and second by the scalar curvature and its derivatives. Hence to analyse the properties of these field equations, we can implement the well understood  techniques from general relativity. 

\subsection{Spherical static star in $f(R)$-gravity}

To study a static spherically symmetric star, immersed in the vacuum Schwarzschild exterior, we consider a spherically symmetric static metric in the stellar interior as
\be 
\label{genmetric} 
ds^2 = - c^2(r) dt^2 + \frac{dr^2}{\left(1-\frac{2m(r)}{r}\right)} + r^2\, d\Omega^2 ~, 
\ee
where $ d\Omega^2  = d\theta^2 +\sin^2\theta \, d \phi^2 $, $c(r)$ and $m(r)$ are well defined (at least $C^2$) functions of the radial coordinate.
We consider the standard matter to be a spherically symmetric perfect fluid with no pressure anisotropy. However we cannot impose the same conditions on the ``curvature" fluid, as in general it is anisotropic. Hence in the above coordinate system the total effective energy momentum tensor is of the following form:
\be
T_{ab}={\rm diag}\left[ -\mu(r),p_r(r),p_\theta(r),p_\theta(r)\right]\;,
\label{em1}
\ee
where we have set
\bea\label{em2}
\mu(r)&=&\frac{\mu^M(r)}{\fp}+\mu^R(r)\;,\\
p_r(r)&=&\frac{p^M(r)}{\fp}+p_r^R(r)\;,\label{em3}\\
p_\theta(r)&=&\frac{p^M(r)}{\fp}+p_\theta^R(r)\;.\label{em4}
\eea
Using the definition (\ref{curvatureEMT}) and the metric (\ref{genmetric}) we can easily express the pressure anisotropy of the curvature fluid in the form
\bea
\label{anis}
p_\theta^R-p_r^R&=&\frac{1}{\fp}\left[\left(\frac{m'}{r}-\frac{m}{r^2}\right)R'\fpp\right.\nonumber\\
&&\left.-\left(1-\frac{2m}{r}\right)\left((R'\fpp)'+\fpp\frac{R'}{r}\right)\right]\;, \nonumber\\
\eea
where $'$ denotes differentiation with respect to the coordinate $r$.

\subsection{The effective mass}
The important question that arises here is, what would be the mass of the star, as measured by a faraway observer? Since any realistic astrophysical stars are immersed in the Schwarzschild vacuum exterior, the total effective mass of the star will be equal to the Schwarzschild mass in the exterior spacetime. Now from the $G^0_0=T^0_0$ field equation, we get 
\be 
\mu=\frac{2m'}{r^2}\;\Rightarrow\;2m(r)=\int\limits_0^r \mu(x)x^2 dx\;.
\label{G00}
\ee
Hence we can interpret the function $m(r)$ as the effective mass, which is generated by the standard matter and also the curvature terms,  enclosed within the shell of physical radius `$r$'. If $r=r_b$ denotes the boundary of the star, then $m(r_b)=M$ will denote the total effective mass of the star, and in the exterior vacuum spacetime this will be the Schwarzschild mass. Thus we immediately see that the total effective mass of the star in a fourth order theory is different from the mass which is generated purely by the standard matter inside the star. The other field equations are then used to relate this effective mass with the pressure terms.
The $G^1_1=T^1_1$ field equation gives
\be
p_r=\frac{2c'}{rc}\left(1-\frac{2m}{r}\right)-\frac{2m}{r^3}\;,
\label{G11}
\ee
while from the doubly contracted Bianchi identity $\nabla_aG^a_r=0=\nabla_aT^a_r$, we get 
\be
(cp_r)'+c'\mu=\frac{2c}{r}(p_\theta-p_r)\;.
\label{conserv}
\ee
Substituting equations (\ref{G00}) and (\ref{G11}) in the LHS of (\ref{conserv}), we have after simplification
\be\label{master1}
\sqrt{1-\frac{2m}{r}}\frac{d}{dr}\left[\frac{1}{r}\sqrt{1-\frac{2m}{r}}c'\right]= c\left[\left(\frac{m}{r^3}\right)'+\frac{p_\theta-p_r}{r}\right]\;.
\ee

\subsection{Problem of an anisotropic star? Not quite.}

The equation (\ref{master1}) is used extensively in the literature to study variations of the Buchdahl-Bondi limit in case of stars having pressure anisotropy (see for example \cite{boehmer,ivanov,lake} and the references therein). However this problem (as we shall see in this section), is not just another problem of an anisotropic star, where the anisotropy parameter is a free parameter. In this case, the pressure anisotropy is a direct consequence of  packaging all the higher derivative terms in the field equations as the energy momentum tensor of a `fictitious' curvature fluid. Hence, the anisotropy parameter depends on the Ricci scalar and it's derivatives, which in turn depend on the metric function, thereby creating a feedback loop. 

To see this feedback effect more transparently, substitute the curvature terms  in (\ref{master1}). Since we have already considered the standard matter to be a perfect fluid, we can immediately see from equations (\ref{em3}) and (\ref{em4}) that 
$p_\theta-p_r=p_\theta^R-p_r^R$. Then using equation (\ref{anis}) in the above expression, we get after some simplification
\bea
\frac{c\fp}{\sqrt{1-\frac{2m}{r}}}\frac{d}{dr}\left(\frac{m}{r^3}\right)&=&\fp\frac{d}{dr}\left[\frac{1}{r}\sqrt{1-\frac{2m}{r}}c'\right]\nonumber\\
&&+c\frac{d}{dr}\left[\frac{1}{r}\sqrt{1-\frac{2m}{r}}(\fp)'\right].
\label{master2}
\eea
The term $m/r^3$ in the LHS of the above equation denotes the average effective density of the star at a radius $r$. It is interesting to note that the metric function $c(r)$ (which depends on the gravitational potential) and the function $\fp$ play a symmetric role in determining the gradient of the effective average density.  When $\fp=1$ the above relation reduces to the well known relation of general relativity \cite{stephani}, which is used to calculate the Buchdahl-Bondi limit.

\section{Conditions on the functions }

In order to have a stable and smooth stellar structure which can be matched smoothly to the Schwarzschild vacuum exterior at the stellar boundary, we have to impose a set of boundary conditions on the interior metric functions as well as placing restrictions on the thermodynamic quantities of the star. Furthermore, to avoid the presence of ghosts or tachyons and to have the Schwarzschild solution as a viable solution of the theory, there are certain conditions on the function $f(R)$. In this section we list all these conditions, which will be used to calculate the bound on the star mass as dictated by the thermodynamic stability.

\subsection{Conditions of the function $f(R)$}

To ensure the attractive nature of gravity, that is the absence of ghost modes and tachyonic fields we must have in the stellar interior
\be
\fp>0\;,\;\; \fpp\ge0\;.
\label{cond1}
\ee
Furthermore, as proved in \cite{amn}, to have the Schwarzschild spacetime as a solution of the theory, the function $f(R)$ must be at least of class $C^3$ with
\be
f(0)=0\;,\;\; \fp(0)\ne 0\;.
\label{cond2}
\ee
Once these conditions are fulfilled, we can have a star with no ghost modes or tachyonic instabilities (which are unphysical as they would destroy stable stellar structures) and furthermore the star can be matched to a vacuum Schwarzschild exterior.

\subsection{Matching conditions}

We know that all astrophysical objects are immersed in vacuum or almost vacuum spacetime 
(like any star within a stellar system), and hence the exterior spacetime around 
a spherically symmetric star is well described by the Schwarzschild geometry. 
We match two spacetimes $\vv^{\pm}$ across the boundary surface $\Sigma$, which in this case will be $r=r_b$. 
The junction surface must be the same in $\vv^{+}$ and
$\vv^{-}$, which implies continuity of both the metric (the first fundamental form) and the extrinsic curvature (the second fundamental form)
across $\Sigma$ as in GR \cite{Israel, Darmois}. Moreover, in $f(R)$-theories of gravity, continuity of the Ricci scalar across the boundary surface and 
continuity of its normal derivative are also required \cite{Deruelle, Clifton, Senovilla}. 
Matching of the first fundamental form requires that
\be
c^2(r_b)= \left(1-\frac{2M}{r_b}\right)\;,
\label{cond3}
\ee
while matching the second fundamental form dictates that the total radial pressure at the surface of the star must vanish ($p_r(r_b)=0$). Therefore using (\ref{G11}) we get
\be
c'(r_b)=\frac{1}{r_b^2}\frac{M}{\sqrt{1-\frac{2M}{r_b}}}\;.
\label{cond4}
\ee
The extra matching conditions for $f(R)$ gravity give
\be
R(r_b)=0=R'(r_b)\;.
\label{cond5}
\ee

\subsection{Regularity Conditions}

The regularity conditions for the interior spacetime dictate that all the metric and the thermodynamic functions should be smooth (at least $C^2$) in the interior of the star. Hence at the centre of the star, the radial derivatives of all the metric and the thermodynamic functions should vanish. Therefore we must have
\be
c'(0)=p'(0)=\mu'(0)=R'(0)=0 \;.
\label{cond6}
\ee
Also from equation (\ref{G00}) we see that near the centre of the star
\be
m(r)\approx r^3\;.
\label{cond7}
\ee

\subsection{Conditions for thermodynamic stability}

From the matching conditions and regularity conditions given above we see that $c'(r_b)>0$ and $c'(0)=0$. Since the function $c(r)$ depends on the gravitational potential, there should be no local extremum of the potential at any non-central shell, for the thermodynamic stability of the star. Hence we impose, in the interior spacetime,
\be
c'(r)\ge0\;,
\label{cond8}
\ee
where the equality is only achieved at the central shell. Thus the function $c(r)$ will be a monotonically increasing function of the coordinate $r$, ranging from $c_0\equiv c(0)>0$ to $c(r_b)$ (which is given by (\ref{cond3})). Furthermore for physical stability of the stellar structure we also require that the total effective density of the star should be a non-increasing function of $r$. Therefore we must have
\be
\frac{d}{dr}\left(\frac{m}{r^3}\right)\le0\;.
\label{cond9}
\ee
And finally, to avoid the pathologies of $\fp=0$ anywhere inside the star (which may happen in the Starobinsky model and other viable models for negative Ricci scalar), we impose 
a further restriction on the Ricci scalar 
\be
R(r)\ge0\;.
\label{cond10}
\ee
Now we can easily see that $R_0\equiv R(0)\ge0$ while $R(r_b)=0$. For physically viable stellar models we would take the function $R(r)$ to be monotonic, which implies
\be
R'(r)\le0\;,
\label{cond11}
\ee
where the equality is only satisfied at the centre and at the surface of the star.

\section{Modification of Buchdahl-Bondi Limit}

In this section, we calculate the bound on the mass to radius ratio of a relativistic compact star, subject to the stability and regularity conditions (\ref{cond1})-(\ref{cond11}) of the previous section. As in GR, we shall see that this bound is independent of the equation of state of the standard matter. Let us denote
\be
\phi(r)=\frac{1}{r}\sqrt{1-\frac{2m}{r}}\;.
\ee
Then we can immediately see that
\be
\phi(r_b)=\frac{1}{r_b}\sqrt{1-\frac{2M}{r_b}}\;.
\ee
We can now prove easily the following proposition:
\begin{prop}
In the interior spacetime, for any shell with the radial coordinate `$r$', $\fp(R(r))c'(r)$ is positive and bounded from below.
\end{prop}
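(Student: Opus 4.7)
The plan is to extract from the master relation \reff{master2} a monotonicity statement for a single scalar built out of $c(r)$, $\fp$, and $\phi(r)$, and then integrate from the stellar surface inwards using the junction conditions. Because $c>0$ and $\fp>0$, the thermodynamic stability condition \reff{cond9} makes the LHS of \reff{master2} non-positive, so
\begin{equation*}
\fp\,(\phi c')'+c\,(\phi\fp')'\;\le\;0\,.
\end{equation*}

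The first step is the product-rule identity
\begin{equation*}
\fp\,(\phi c')'+c\,(\phi\fp')'\;=\;\frac{d}{dr}\bras{\phi\,(c\fp)'}\;-\;2\,\phi\,c'\,\fp'\,,
\end{equation*}
checked by expanding both sides to $\fp\phi' c'+c\phi'\fp'+\phi\fp c''+\phi c\fp''$. Combined with the inequality above this gives $\frac{d}{dr}\bras{\phi(c\fp)'}\le 2\phi c'\fp'$. Now $\fp'=\fpp R'\le 0$ by \reff{cond1} and \reff{cond11}, while $\phi>0$ and $c'\ge 0$ by \reff{cond8}, so the RHS is non-positive. Hence $\phi(r)(c\fp)'(r)$ is a non-increasing function of $r$ on $[0,r_b]$.

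Next I would pin down its value at the boundary. The extra $f(R)$ matching condition \reff{cond5} gives $R'(r_b)=0$, hence $\fp'(r_b)=0$, so $(c\fp)'(r_b)=\fp(R(r_b))c'(r_b)=\fp(0)\,c'(r_b)$, with $\fp(0)>0$ by \reff{cond1}--\reff{cond2}. Inserting \reff{cond3}--\reff{cond4}, a short calculation collapses the $\sqrt{1-2M/r_b}$ factors to give
\begin{equation*}
\phi(r_b)(c\fp)'(r_b)\;=\;\frac{M\,\fp(0)}{r_b^3}\;>\;0\,,
\end{equation*}
and monotonicity then yields $\phi(r)(c\fp)'(r)\ge M\fp(0)/r_b^3$ for every $r\in[0,r_b]$. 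Finally, since $(c\fp)'=\fp c'+c\fp'$ with $c\fp'\le 0$, one has $\fp c'\ge (c\fp)'$, so dividing by $\phi(r)>0$ delivers the desired pointwise positive lower bound
\begin{equation*}
\fp(R(r))\,c'(r)\;\ge\;\frac{r\,M\,\fp(0)}{r_b^3\sqrt{1-2m(r)/r}}\;>\;0
\end{equation*}
for all $0<r\le r_b$.

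The step I expect to require most care is spotting the divergence-plus-remainder identity for $\fp(\phi c')'+c(\phi\fp')'$, which converts the two independent terms of \reff{master2} into a single quantity whose monotonicity can be read off. Once that is in hand, and once one notices that the extra $f(R)$ matching condition $R'(r_b)=0$ is precisely what kills the boundary contribution of $c\fp'$ so that $(c\fp)'(r_b)$ reduces to the familiar GR value multiplied by $\fp(0)$, the remainder of the argument is pure sign tracking driven by \reff{cond1}, \reff{cond8}, \reff{cond9} and \reff{cond11}.
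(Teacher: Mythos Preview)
Your argument is correct and reaches exactly the same lower bound as the paper, $\fp(R(r))\,c'(r)\ge \fp(0)Mr/\bigl(r_b^{3}\sqrt{1-2m/r}\,\bigr)$; your product-rule identity $\fp(\phi c')'+c(\phi\fp')'=[\phi(c\fp)']'-2\phi c'\fp'$ is precisely the differential version of the paper's integration by parts from $r$ to $r_b$, and your final step $\fp c'\ge(c\fp)'$ corresponds to the paper's discarding of the positive boundary term $-c(r)\phi(r)\fpp R'(r)$. The two arguments are organizationally different but substantively identical.
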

\begin{proof}
Using equation (\ref{cond9}) in (\ref{master2}) we get
\be
\fp\frac{d}{dr}\left[\phi(r)c'(r)\right]+c\frac{d}{dr}\left[\phi(r)(\fp)'\right]\le0\;.
\label{master3}
\ee
Integrating (by parts) the above equation from $r$ to $r_b$, where $r_b$ is the boundary of the star, we obtain
\bea
\left[\fp\left[\phi(r)c'\right]\right]_r^{r_b} +\left[c\left[ \phi(r)(\fp)'\right]\right]_r^{r_b}&&\nonumber\\
-2\int\limits_r^{r_b}\fpp R'c'\phi(r)&\le&0\;.
\eea
Since  in the interior spacetime $R'\le0$, the third term in LHS of the above equation has a positive contribution and hence dropping that term does not change (rather strengthens) the inequality. Therefore we have
\bea
\fp(R(r_b))\left[\phi(r_b)c'(r_b)\right]-\fp(R(r))\left[\phi(r)c'(r)\right]&&\nonumber\\
+c(r_b)\left[\phi(r_b)\fpp R'(r_b)\right]-c(r)\left[\phi(r)\fpp R'\right]&\le&0.\nonumber\\
\eea
Now using (\ref{cond5}) in the above equation, the third term vanishes, while by (\ref{cond11}) the fourth term has a positive contribution to the LHS and can be dropped without altering the inequality. Substituting $c'(r_b)$ from (\ref{cond4}) we get
\be
\fp(R(r))\frac{dc}{dr}\ge \frac{\fp(0)}{r_b^3}\frac{Mr}{\sqrt{1-\frac{2m}{r}}}\;.
\label{buch1}
\ee
\end{proof}
Since the above inequality is true $\forall r \in [0,r_b]$, we can integrate the above from the centre to the surface, without changing the inequality. Integrating we get
\bea
\fp(0)c(r_b)-\fp(R_0)c_0-\int\limits_0^{r_b}\fpp R'cdr&\ge&\nonumber\\
\frac{\fp(0)M}{r_b^3}\int\limits_0^{r_b}\frac{r}{\sqrt{1-\frac{2m}{r}}}dr.&&
\label{buch2}
\eea
To get a bound on the integral term above, we state and prove the following proposition:
\begin{prop}
For any thermodynamically stable stellar model, satisfying the conditions (\ref{cond1})-(\ref{cond11}), the integral of $(1/\phi(r))$ from the centre to the surface is bounded from below, and the bound is equal to $(r_b^3/2M)\left[1- r_b\phi(r_b)\right].$
\end{prop}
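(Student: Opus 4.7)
The plan is to exploit the non-increasing average density condition (\ref{cond9}) to replace the unknown profile $m(r)$ inside the square root by a known function of $r$, and then evaluate the resulting elementary integral in closed form.

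First I would rewrite $1/\phi(r)=r/\sqrt{1-2m(r)/r}$. Condition (\ref{cond9}) states that $m(r)/r^3$ is non-increasing, so for every $r\in[0,r_b]$ one has $m(r)/r^3\ge m(r_b)/r_b^3 = M/r_b^3$, and therefore
\be
1-\frac{2m(r)}{r}\;\le\;1-\frac{2Mr^2}{r_b^3}.
\ee
Taking reciprocals of the (positive) square roots and multiplying by $r$ preserves the inequality, giving the pointwise lower bound $1/\phi(r)\ge r/\sqrt{1-2Mr^2/r_b^3}$, which can be integrated termwise from $0$ to $r_b$ without losing the sense of the inequality.

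Next I would evaluate the right-hand side integral explicitly. Setting $u=1-2Mr^2/r_b^3$, so that $r\,dr=-(r_b^3/4M)\,du$, one finds
\bea
\int_0^{r_b}\frac{r\,dr}{\sqrt{1-2Mr^2/r_b^3}}
&=&\frac{r_b^3}{4M}\int_{1-2M/r_b}^{1}u^{-1/2}\,du \nn\\
&=&\frac{r_b^3}{2M}\bras{1-\sqrt{1-\tfrac{2M}{r_b}}}.
\eea
Finally I would recognise that $r_b\phi(r_b)=\sqrt{1-2M/r_b}$ by the definition of $\phi$, so the expression in brackets is precisely $1-r_b\phi(r_b)$, which delivers the claimed lower bound
\be
\int_0^{r_b}\frac{dr}{\phi(r)}\;\ge\;\frac{r_b^3}{2M}\bras{1-r_b\phi(r_b)}.
\ee

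There is no serious obstacle here: the argument is essentially the same comparison trick used in the GR Buchdahl proof, and the integral is elementary. The only point that requires a little care is making sure the inequality direction is preserved when taking $\sqrt{\,\cdot\,}$ and reciprocals, which is guaranteed because $1-2m/r>0$ throughout the interior by regularity together with $m(r)\sim r^3$ near the centre (condition (\ref{cond7})).
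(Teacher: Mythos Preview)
Your argument is correct and follows essentially the same route as the paper: both establish the comparison $m(r)/r^{3}\ge M/r_b^{3}$ and then bound the integral by the constant-average-density profile, evaluating the resulting elementary integral. The only cosmetic difference is that the paper reaches $2m(r)\ge 2Mr^{3}/r_b^{3}$ via a decomposition $\mu=6M/r_b^{3}+\rho$ and a sign argument on $\int_0^{r}\rho\,x^{2}\,dx$, whereas you obtain it directly from the monotonicity of $m/r^{3}$ in condition~(\ref{cond9}); your version is slightly cleaner but not substantively different.
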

\begin{proof}
To prove this, we follow the same steps as given in \cite{stephani}. Let us write the total effective density function $\mu(r)$ as a sum of constant average density $\mu_0\equiv 6M/r_b^3$ and $\rho(r)$ as the variation on the average, that is
\be
\mu(r)=\frac{6M}{r_b^3}+\rho(r)\;.
\ee
This implies
\be
2m(r)=2M\frac{r^3}{r_b^3}+\int\limits_0^r\rho(r)r^2dr\;.
\label{buch3}
\ee
From the above equation we can easily see that
\be
\int\limits_0^{r_b}\rho(r)r^2dr=0\;.
\ee
Also since the effective density is a monotonically non increasing function from the centre to the surface of the star, we must have the central density greater than the average density. Therefore we have $\rho(0)\ge0$ and $\rho'\le0$. The integral term in (\ref{buch3}) is always positive and goes to zero as $r\rightarrow r_b$. This implies, $\forall r \in [0,r_b]$ that
\be
2M\frac{r^3}{r_b^3}\le 2m\;,
\ee
and therefore 
\be
\int\limits_0^{r_b}\frac{rdr}{\sqrt{1-\frac{2m}{r}}}\ge\int\limits_0^{r_b}\frac{rdr}{\sqrt{1-\frac{2M}{r}}}\equiv\frac{r_b^3}{2M}\left[1- r_b\phi(r_b)\right].
\label{buch4}
\ee
\end{proof}
We can use the above equation in the RHS of (\ref{buch2}), without altering the inequality. Hence we now get
\bea
\fp(0)c(r_b)-\fp(R_0)c_0-\int\limits_0^{r_b}\fpp R'cdr&\ge&\nonumber\\
\frac{\fp(0)}{2}\left[1- {\sqrt{1-\frac{2M}{r_b}}}\right].&&
\label{buch5}
\eea
Now consider the integral term in the LHS of the above equation. As $R'\le0$, we have
\be
-\int\limits_0^{r_b}\fpp R'c(r)d=\left|\int\limits_0^{r_b}\fpp R'c(r)dr\right|\;.
\ee
Since this term has a positive contribution in the LHS of (\ref{buch5}), we cannot drop it. However as we have seen $c'(r)\ge0$ throughout the interior of the star, therefore $c(r_b)\ge c(r), \forall r\in[0,r_b]$ and hence the inequality will not change if we substitute
\be
c(r_b)\left|\int\limits_0^{r_b}\fpp R'dr\right|\equiv c(r_b)\left(\fp(R_0)-\fp(0)\right),
\ee
for the integral term in (\ref{buch5}). On substitution, and using (\ref{cond3}), we finally have 
\be
\sqrt{1-\frac{2M}{r_b}}\left[\frac{\fp(0)}{2}+\fp(R_0)\right]-\frac{\fp(0)}{2}\ge\fp(R_0)c_0\;.
\label{buch6}
\ee
We are now in a position to state and prove the following theorem:
\begin{thm}
The regularity and thermodynamic stability conditions on a spherically symmetric and static star of radius $r_b$, immersed in the Schwarzschild vacuum in $f(R)$-gravity, impose an upper bound on the total effective mass of the star. This upper bound lies between the Buchdahl-Bondi bound for general relativity and the Schwarzschild static limit $2M=r_b$.
\end{thm}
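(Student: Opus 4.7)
The plan is to extract the desired mass bound directly from the key inequality~(\ref{buch6}) derived in the preceding propositions. Since $c_0 = c(0) > 0$ by the regularity conditions and $\fp(R_0) > 0$ by (\ref{cond1}), the right-hand side of~(\ref{buch6}) is strictly positive; dropping it yields the weaker but sufficient inequality
\begin{equation}
\sqrt{1-\frac{2M}{r_b}}\left[\frac{\fp(0)}{2}+\fp(R_0)\right]\ge\frac{\fp(0)}{2}.
\end{equation}
Thus the mass bound will be expressed purely in terms of the single dimensionless parameter $\alpha \equiv \fp(R_0)/\fp(0)$, which encodes all of the $f(R)$ data.

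Next, I would verify that $\alpha \ge 1$. This follows because the conditions (\ref{cond1}) give $\fpp \ge 0$, so $\fp$ is a non-decreasing function of $R$; combined with $R_0 \ge 0$ from (\ref{cond10}) applied at the centre, this gives $\fp(R_0) \ge \fp(0)$. Rearranging the displayed inequality and squaring then gives the explicit bound
\begin{equation}
\frac{2M}{r_b}\le 1-\frac{1}{(1+2\alpha)^2}.
\end{equation}

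Finally, I would examine the two limiting regimes of the parameter $\alpha \in [1,\infty)$. The function $\alpha \mapsto 1 - (1+2\alpha)^{-2}$ is manifestly monotonically increasing. At $\alpha=1$ (the GR limit, in which $\fp$ is constant and in particular the Starobinsky-like $R^2$ correction vanishes at the centre), the bound reduces to $2M/r_b \le 8/9$, recovering the classical Buchdahl-Bondi limit. As $\alpha \to \infty$, the bound approaches the Schwarzschild static limit $2M/r_b \to 1$. Therefore the upper bound on $2M/r_b$ always lies in the interval $[8/9,1)$, which is precisely the statement of the theorem, and the surface redshift $z = (1-2M/r_b)^{-1/2} - 1$ can exceed $2$ whenever $\alpha > 1$.

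The analysis above is essentially algebraic once (\ref{buch6}) is in hand; the main conceptual point—and the only place where $f(R)$ physics enters nontrivially—is identifying $\alpha = \fp(R_0)/\fp(0)$ as the single parameter controlling the deviation from the Buchdahl-Bondi limit, and verifying via (\ref{cond1}) and (\ref{cond10}) that $\alpha \ge 1$, so that the modification can only \emph{loosen} the GR bound and never tighten it.
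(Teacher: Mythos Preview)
Your proposal is correct and follows essentially the same route as the paper: drop the strictly positive right-hand side of~(\ref{buch6}), divide by $\fp(0)$, square, and read off the bound in terms of the ratio $\alpha=\fp(R_0)/\fp(0)$, then note $\alpha\ge 1$ via $\fpp\ge 0$ and $R_0\ge 0$ to sandwich the bound between $8/9$ and $1$. The only cosmetic slip is that since the dropped term is \emph{strictly} positive, the resulting inequalities should be strict ($>$ and $<$ rather than $\ge$ and $\le$), matching the paper's final form $2M/r_b<1-(1+2\alpha)^{-2}$; your expression $1-(1+2\alpha)^{-2}$ is algebraically identical to the paper's $4\alpha(1+\alpha)/(1+2\alpha)^2$.
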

\begin{proof}
From the regularity conditions at the centre of the star, and also by the conditions on the function $f(R)$, we can see that the RHS of the equation (\ref{buch6}) is strictly greater than zero. Therefore we get
 \be
\sqrt{1-\frac{2M}{r_b}}\left[\frac{\fp(0)}{2}+\fp(R_0)\right]-\frac{\fp(0)}{2}>0\;,
\ee
which, by squaring the LHS, can be simplified to the following expression
\be
\left[1+2\frac{\fp(R_0)}{\fp(0)}\right]^2\left(1-\frac{2M}{r_b}\right)>1\;.
\label{buch7}
\ee
Taking the quantity $2M$ to the RHS and simplifying we finally obtain the upper bound as
\be
2M< \frac{4\frac{\fp(R_0)}{\fp(0)}\left[1+\frac{\fp(R_0)}{\fp(0)}\right]}{\left[1+2\frac{\fp(R_0)}{\fp(0)}\right]^2}r_b\;.
\ee
By our assumptions on the Ricci Scalar and the function $f(R)$, we can immediately see that $\frac{\fp(R_0)}{\fp(0)}\ge1$. When this factor equals unity, we regain the usual Buchdahl-Bondi upper bound $2M<(8/9)r_b$. When this term is much larger than unity, this upper bound tends to the Schwarzschild static limit $2M=r_b$. 
\end{proof}
The above theorem ensures that if we suitably modify general relativity by a function $f(R)\ne R$, we can have a thermodynamically stable, regular spherical star with an effective mass $M$ in the region $(8/9)r_b\le 2M\le r_b$, which is the forbidden region in general relativity. Hence in these theories we can have more massive but stable compact stars, that may be one of the keys to the dark matter problem. Also for these stars the surface redshift $z\ge2$, which can in principle be detected experimentally. To see this effect more transparently, let us assume a small modification from general relativity. In other words let
\be
\frac{\fp(R_0)}{\fp(0)}=1+\alpha\;,\; 0\le \alpha <<1\;.
\ee
Then the upper bound on the effective mass of the star becomes
\be
2M<\frac89\left(1+\frac{\alpha}{6}\right)r_b\;.
\ee
Hence the net fractional increase of the effective mass, defined as $\delta M\equiv(M_{eff}- M_{GR})/M_{GR}$ (where $M_{GR}$ is the upper bound in general relativity), can be written as
\be
\frac{\delta M}{M_{GR}}=\frac{4\alpha}{27}\;.
\ee
This may easily and naturally account for the extra massive neutron stars in the sky. We can immediately calculate the maximum surface redshift for such stars. We get
\be
z<z_{max}\equiv 2(1+\alpha)
\ee
Hence, from our observational data from the compact stars, we can in principle experimentally verify any deviation from general relativity in the high curvature regime.

\section{Discussion}

In this paper, we studied model independent bounds on spherically symmetric stellar structures in $f(R)$-gravity. In particular, our results are independent of the matter distribution and the equation of state. Subject to very generic conditions of regularity and thermodynamic stability in the interior of the star and the matching conditions at the stellar surface to the vacuum Schwarzschild exterior, we transparently demonstrated that the mass to radius ratio of the star is bounded from above, and this is a stricter bound than the Schwarzschild static limit. In other words, we generalised the Buchdahl-Bondi bound on static stars in GR to $f(R)$-theories.

We also showed that this upper bound is larger than the Buchdahl-Bondi limit of GR, whenever $f(R)\ne R$. Hence, in principle, we can pack extra effective mass in a stable compact star in these theories, which is forbidden in GR.  These extra massive stars may be one of the solutions to the dark matter problem, that manifests through the rotation curves of the galaxies.

Furthermore, as a direct consequence of this extra effective mass, we proved that the surface redshift of an extra-massive compact star can be greater than 2, which is the upper limit in the case of general relativity. This gives a novel and interesting scenario, where we can observationally verify the validity or otherwise of general relativity in the strong gravity/ high curvature regime. 

An interesting possibility of further study in this scenario would be the violation of Chandrasekhar limit. We know, it is possible that Chandrasekhar limit may be violated in the presence of strong magnetic fields. The existence of {\em super-Chandrasekhar} white dwarfs has been motivated on the grounds of a polytropic equation of state in GR for an anisotropic matter distribution \cite{herrera}. It would be interesting to see, how the curvature terms in $f(R)$-gravity modify this limit.

\begin{center}
{\bf Acknowledgments}
\end{center}
AMN and RG are supported by the National Research Foundation (South Africa). SDM acknowledges that this work is based upon 
research supported by the South African Research Chair Initiative of the Department of Science and Technology.  


\end{document}